\newtheorem{theorem}{Theorem} 
\newtheorem{lemma}{Lemma}
\begin{document}
\title{On Partial Coverage and Connectivity Relationship in Deterministic WSN Topologies }
 
\author{Rameshwar Nath Tripathi,
        Kumar Gaurav,
        and Y N Singh,~\IEEEmembership{Senior Member,~IEEE}
}
\maketitle
\begin{abstract}
The primary function of a sensor network is to perform the sensing task. For proper sensing, the coverage and connectivity property of the network must be maintained. Relationship between coverage and connectivity has been extensively investigated for full coverage scenarios. But, investing in full coverage incurs unnecessary cost when partial coverage is sufficient.  In this work, we focus on the relationship between partial coverage and connectivity. We find the conditions when partial coverage implies connectivity as a function of coverage fraction. This condition facilitates the network designers to configure the connected network for desired coverage fraction.
\end{abstract}
 \begin{IEEEkeywords}
Partial area  coverage, Connectivity, Coverage fraction.
\end{IEEEkeywords}
\IEEEpeerreviewmaketitle
\section{Introduction}
A single node can sense only in a small region. Hence, to monitor large area, a network of sensor node is needed. If these sensor nodes can communicate  wirelessly, they are called Wireless Sensor Networks(WSNs). In the WSN, generally there will be a base station which collects measurements from all the nodes. As the nodes can communicate over short distances due to the limited capabilities, multi-hop communication is used for transferring measurements to base station. Thus, sensing as well as connectivity are the important features in a WSN. \\
Coverage of a sensor network represents how well the sensors monitor a field of interest (FoI) where they are deployed. It is the performance measure of the network sensing capability. Connectivity represents how well the nodes can communicate. It is the performance measure of the network communication capability. Characterization of these two measures help in the better design of sensor networks for different applications.
The coverage problem can be formulated as an optimization problem, i.e., to maximize lifetime subject to minimum coverage guarantee while maintaining the connectivity. \\
Coverage is broadly classified into three categories : (a) area coverage, (b) target coverage, and (c) barrier coverage. Area coverage concerns how well the sensor nodes monitor an area of interest (AOI). Target coverage requires monitoring of a set of targets instead of the whole sensing field. The problem of preventing an intruder from entering a Boundary of interest (BOI) is referred to as the barrier coverage. \\
The coverage characterizes the performance of a sensing network. It is represented by coverage fraction $\alpha$  where every point in the covered area = $\alpha$ x total area is sensed by at least one  sensor. Here, $0<\alpha \leq 1$. If coverage fraction $\alpha= 1$, then, we have full coverage. In the present paper, we focus on partial coverage  i.e, $ \alpha < 1$ . Since full coverage is stringent condition and expensive, it should be used only if necessary. Most of the applications \cite{yetgin2017survey} may work satisfactorily with partial coverage. For example, one such scenario is a sensor network for the weather forecast. Instead of knowing the humidity at every location in the field, measurement of  certain fraction of the area might prove sufficient for the humidity profile of the whole field. Therefore, applications may require  network configurations with different degree of coverage and connectivity. We can classify the connectivity and coverage (CC) requirements as follows: Connected full coverage (CFC) and  Connected partial coverage (CPC) \cite{li2011transforming}. \\
While solving the connected coverage (CC) optimization problem, the least possible number of constraints should be used to reduce the design complexity. The need for constraints reduction gives rise to two critical research questions concerning the design of wireless sensor networks:
\begin{enumerate}
\item  Are  coverage and connectivity constraints independent?
\item  If not, does coverage imply connectivity or vice versa so that a sensor network only needs to be configured to satisfy the stronger of the two conditions?
\end{enumerate}
Answering these questions qualitatively and quantitatively would greatly facilitate the design of WSNs. It will lead to insights into how to utilize the minimum number of nodes to achieve a desired coverage degree and coverage fraction while maintaining the required system connectivity. When number of node are more than needed, it will also tell us about increase in lifetime of WSN. \\
Most of the research work on sensing coverage in the literature has focused on the connected full coverage (CFC). Research in \cite{xing2005integrated, zhang2005maintaining} contributed to achieve full coverage as well as connectivity. However, the above two research questions have not been explored in the context of partial coverage. \\  
In this paper, we report an investigation on this relationship for deterministic deployment and the conditions which guarantee connectivity while configuring the desired coverage fraction. We discuss
 the network connectivity and coverage problem. We find the lower bound as well as the upper bound on communication radius and also derive an exact expression for communication radius in terms of coverage fraction. \\
 The  paper is organized as follows: in Section II, the notion of coverage and connectivity is
 introduced. Section III discusses the relationship between partial coverage and the connectivity problem in WSN. Finally, in Section IV, we conclude
 the outcome of this work along with the future research directions and  some open problems.   
 \vspace{-3.5mm} 
 \section{System Model} 
In this section, we present the network model along with its various components i.e. monitored space,  sensing region, coverage of a space point, Network coverage, node communication region and measure of network connectivity.\
The network consists of sensor nodes uniformly deployed sensor nodes over a rectangular field of interest (FoI), $\Psi$. Though, we have taken rectangular field, our results will hold true for any arbitrary field of interest. The events of interest (EOI), can take place randomly at any location in the sensing field. The events are represented as space points. A node can detect events within a certain distance, called sensing radius $R_S$. Let us consider a space point z and a set of sensors $\Omega = \{s_1, s_2, \ldots , s_N\}$ deployed in the sensing field. The \textit {d(s,z)} denotes the distance between a sensor $s$ and a space point $z$. Let $(s_x, s_y )$ and $(z_x, z_y )$ be the coordinates of the sensor $s$ and the space point $z$, respectively.\\
The coverage function of a space point $z$ relative to a sensor node \textit{i} is given by a binary value
\begin{align}
f(d(s_i,z)) = \left\{
        \begin{array}{ll}
            1 & \quad d(s_i,z) \leq R_S \\            
            0 & \quad  otherwise.
        \end{array}
    \right.
\end{align}
The value 1 implies that the sensor can monitor the point $z$.\\
A point might be covered by multiple sensors at the same time. The set of these multiple sensors is represented by $\Omega_Z\subset \Omega$. The coverage function of a space point relative to a set of sensors can be the addition of the coverage function of the point relative to each individual sensor. Thus, coverage function  $F(z)$ of a space point $z$ is defined as
\begin{align}
 F(z) = \sum\limits_{i=1}^{k} f(d(s_i,z)).
\end{align}
If $F(z) = k$, then we can say that the point $z$ is \textit{k}-covered.\\  
The coverage of the sensing field  relative to the deployed sensor nodes can be computed using equation (2), and is called the network coverage. The network coverage function $F_N$ is defined as the minimum value of $F(z)$ among all the possible values of $z$ in the entire network, i.e.,
\begin{align}
F_N = \displaystyle\min_{\forall z \in \Psi} F(z).
\end{align}
We assumes that each node $s_i$ is able to communicate only up to a certain distance from itself, called the communication radius $R_C$. 
The two nodes $s_i$ and $s_j$ are said to be connected if they are able to communicate directly with each other. In other words, the Euclidean distance between them must be less than or equal to their communication radius i.e. $d(s_i,s_j) \leq R_C$. 
The communication network of the set of nodes $S$ is modeled as a communication graph. A network is said to be connected iff there is at least a path between any pair of nodes as well as a path from every node to the base station.
\vspace{-2mm}
\section{RELATIONSHIP BETWEEN PARTIAL COVERAGE AND FULL CONNECTIVITY } 
In this section, we focus on understanding the relationship between partial coverage and full connectivity. We also find the conditions when partial coverage implies full connectivity. \\
There exist two possibilities-- (i) coverage implies connectivity and (ii) connectivity implies coverage. A connected network cannot guarantee coverage, because coverage is concerned with whether  every point of FoI is covered or not, While connectivity only requires all locations of active nodes to be connected. A covered network may guarantee connectivity, as coverage requires majority locations of the sensing field to be covered. The connectivity for a certain coverage depends upon the ratio of communication radius to sensing radius.\\ 
Zhang and Hou \cite{zhang2005maintaining} explored and established a connected full Coverage(CFC) condition. It states that -\\
\textit{  Assuming the monitored region is a convex set, the condition of $R_C\geq 2R_S$ is both necessary and sufficient to ensure that the complete coverage
of a convex region implies connectivity in an arbitrary network.}\\
The above condition is stated to be true for convex region, but we found that, this condition is also true for any arbitrary monitored region which may not be the convex region. The statement can be restated as:\\
\textit{For a given arbitrary monitored region, the condition of $R_C\geq 2R_S$ is both necessary and sufficient to ensure that complete coverage
of the region implies connectivity.}\\
The condition $R_C\geq 2R_S$ will also suffice as the connectivity criteria in contiguous partial coverage. In this paper, we will find a tighter bound on $R_C$ in terms of $\alpha$.\\
The goal in planning  a WSN is to maximize coverage with a minimum number of sensors. Thereby achieving least cost for the network. To realise this goal, the deployment process of the sensors plays an important role. A deployment is characterised by a deployment strategy and a deployment pattern. On the basis of the deployment strategy, the networks can be classified as structured or unstructured. In an structured sensor network, the nodes are placed at the specific locations, while in an unstructured sensor network, the nodes are placed randomly. Based on the area of the sensing field, area covered by deployed sensors, and overlapped sensing area among the nodes, the coverage can be classified as: exact coverage, over coverage and under coverage. The exact coverage means sensing field is covered without any overlap among the sensing regions of the nodes. If sensing field is covered and there exists some overlap among nodes, it is called over coverage. 
 To achieve exact coverage is not possible for circular sensing range. Only the possibilities of over-coverage and under-coverage exits. Formally, over coverage is called full coverage. Here, every point is covered by at least one sensor. Under coverage is called partial coverage where some points are not covered.
In a deterministic deployment to cover a plane, optimal placement pattern is  equilateral triangle lattice where sensor nodes are placed on the vertices of equilateral triangles.  \\
In this work, we study the partial coverage for the deterministic deployment strategies in the field of interest. We consider full coverage as the baseline to study partial coverage. Sensor nodes in triangular lattice pattern provides optimal full coverage when side length of the basic triangle pattern is $\sqrt{3}R_S$. This side length represents the maximum inter-node distance while ensuring 100\% coverage. If the inter-node distance is less than or equal to $\sqrt{3}R_S$, then coverage is full otherwise the FoI is under-covered. Therefore, partial coverage can only happen if $d\geq\sqrt{3}R_S$. If we keep on increasing the  spacing between nodes, overlap area just becomes zero at $d=2R_S$. For $d\geq 2R_S$, there is no overlap area. Therefore, for study of partial area coverage, the region of interest is $d\geq\sqrt{3}R_S$. 
we can divide this region into two sub-regions, i.e,$ \sqrt{3}R_S\leq d\leq 2R_S$ and $d\geq 2R_S$. We compute the spacing between nodes in triangular mesh to achieve desired $\alpha$. We refer to this spacing $d_\alpha$. To make our discussion precise, we will derive $d_\alpha$.

 \begin{lemma}
 Given that the nodes with sensing radius $R_S$ are deployed in the triangular mesh configuration in the sensing field $\Psi$, the coverage fraction of the sensing field varies non-linearly with spacing between two adjacent nodes, in the range $\sqrt{3}R_S\leq d\leq 2R_S$ 
\end{lemma}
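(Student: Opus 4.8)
\emph{Proof strategy.} The plan is to use the periodicity of the triangular lattice to reduce the global coverage fraction to a single fundamental cell. Pick three mutually adjacent nodes; they are the vertices of an equilateral triangle $T$ of side $d$, and since the lattice tiles the plane by such triangles (each congruent to $T$ up to a reflection, which leaves the disk arrangement invariant), the coverage fraction $\alpha$ of $\Psi$ equals the fraction of $T$ covered by the three disks $D_1,D_2,D_3$ of radius $R_S$ centred at its vertices --- ignoring, as is standard and consistent with the paper's assumptions, the lower-order boundary effects of a finite $\Psi$. Thus $\alpha(d) = A_{\mathrm{cov}}(d)\big/|T|$ with $|T| = \tfrac{\sqrt3}{4}d^2$ and $A_{\mathrm{cov}}(d)$ the area of $T\cap(D_1\cup D_2\cup D_3)$.

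Next I would pin down the geometry on the range $\sqrt3 R_S \le d \le 2R_S$ supplied by the preceding discussion. Each disk $D_i$ covers the $60^\circ$ corner wedge of $T$ at vertex $i$. Because $d\le 2R_S$, disks at adjacent vertices still overlap, cutting a lens out of the shared edge; because $d\ge\sqrt3 R_S$, the circumradius $d/\sqrt3$ of $T$ is at least $R_S$, so no point of $T$ lies in all three disks and a non-empty curvilinear hole persists around the centroid. Every point of an edge is within $d/2\le R_S$ of an endpoint, so this central hole is the \emph{only} uncovered part of $T$. Inclusion--exclusion then gives $A_{\mathrm{cov}}(d) = 3S_{60} - 3L_{\mathrm{in}}(d)$, where $S_{60}=\tfrac{\pi}{6}R_S^2$ is a $60^\circ$ sector and $L_{\mathrm{in}}(d)$ is the part of a two-disk lens lying in $T$. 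The lens is symmetric about the edge it straddles, and for $R_S\le d\le 2R_S$ it reaches neither past the far vertices of the two cells on that edge nor across the other two sides of $T$; hence $L_{\mathrm{in}}(d)$ is exactly half the standard equal-circle lens, $L_{\mathrm{in}}(d) = R_S^2\cos^{-1}\!\tfrac{d}{2R_S} - \tfrac{d}{4}\sqrt{4R_S^2 - d^2}$.

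Putting these together yields the closed form
\[
\alpha(d) = \frac{\dfrac{\pi}{2}R_S^2 - 3R_S^2\cos^{-1}\!\dfrac{d}{2R_S} + \dfrac{3d}{4}\sqrt{4R_S^2 - d^2}}{\dfrac{\sqrt3}{4}\,d^2},
\]
which I would sanity-check at the endpoints ($\alpha(\sqrt3 R_S)=1$ and $\alpha(2R_S)=\pi/(2\sqrt3)$, the classical circle-packing density). Non-linearity then follows at once: it suffices to observe that the points $(d,\alpha(d))$ for $d=\sqrt3 R_S$, $d=2R_S$ and, say, $d=\tfrac12(\sqrt3+2)R_S$ are not collinear --- equivalently, that $\alpha''$ is not identically zero --- and indeed the $\cos^{-1}$ and radical terms make $\alpha$ a transcendental, hence non-affine, function of $d$ on the whole interval.

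The step I expect to be the real work is the inclusion--exclusion bookkeeping in the second paragraph: rigorously confirming, uniformly over $\sqrt3 R_S\le d\le 2R_S$, that $T$ meets each pairwise lens in exactly half its area, that there is no triple overlap, and that no uncovered pocket forms anywhere except at the centre --- the values $d=\sqrt3 R_S$ and $d=2R_S$ serving as the degenerate boundary checks. A minor additional point is to state explicitly the assumption under which finite-domain boundary effects are negligible, so that the reduction to a single cell in the first paragraph is legitimate.
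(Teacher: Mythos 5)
Your proposal is correct and follows essentially the same route as the paper: reduce to the basic equilateral cell, compute the covered area as three $60^\circ$ sectors minus three half-lenses, and divide by $\tfrac{\sqrt3}{4}d^2$; your closed form for $\alpha(d)$ is algebraically identical to the paper's equation (4) after substituting $\beta=d/2R_S$, and your endpoint checks ($\alpha=1$ at $d=\sqrt3 R_S$, $\alpha=\pi/(2\sqrt3)\approx 0.906$ at $d=2R_S$) agree with the values the paper quotes. If anything, your inclusion--exclusion bookkeeping (no triple overlap since the circumradius $d/\sqrt3\ge R_S$, the lens split in half by the shared edge, the explicit non-collinearity argument for non-linearity) is more careful than the paper's, which simply asserts the formula.
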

 \begin{proof}
 The nodes are deployed in a rectangular region of size ${L\times L}$. Consider the simple scenario, where any two nodes $s_i$ and $s_j$ are neighbour and $s_j$ lies on either side of $s_i$.  The inter-node distance between neighbouring nodes $s_i$ and $s_j$ is given by $d=d(s_i, s_j)$. With the large number of nodes, the sensor placement is as shown in Fig. \ref {fig:RDS}. It is just a regular pattern of the configuration as shown in Fig. \ref {fig:TP}.  
The common sensing area of two nodes is denoted by $I_{ij}$. It is given by
\begin{align*}
 I_{ij} = R_S^2\Bigg[2\arccos{\bigg(\frac{d}{2R_S}\bigg)}-\frac{d}{R_S}\sqrt{1-\bigg(\frac{d}{2R_S}\bigg)^2}\Bigg].
  \end{align*}
 The overall coverage ratio, $\alpha$, of the sensing field $\Psi$ in  Fig. \ref{fig:RDS} will be equal to the coverage ratio of triangular pattern in Fig. \ref{fig:TP}. 
\begin{align*}
 \alpha = \frac{R_S^2\Big[\frac{\pi}{2}-3\arccos{\big(\frac{d}{2R_S}\big)}+\frac{3d}{2R_S}\sqrt{1-\big(\frac{d}{2R_S}\big)^2}\Big]}{\frac{\sqrt{3}}{4}d^2}.
  \end{align*}
  We define inter-node distance relative to $2R_S$ as $\beta=d/2R_S$, and putting this value in $\alpha$,we get
  \begin{align} \label{eq:4}
 \alpha = \frac{1}{\sqrt{3}\beta^2}\Bigg[\frac{\pi}{2}-3\arccos{\big(\beta\big)}+3\beta\sqrt{1-\big(\beta\big)^2}\Bigg].
  \end{align}
\end{proof}
\vspace{-4mm}
 Equation (4) represents the relationship between $\alpha$ and node spacing $d$ in the triangular mesh. We can set the spacing between triangle vertices to meet desired $\alpha$-coverage. We denote this spacing by $d_\alpha$. Thus, for a given $\alpha$, one can compute the $d_\alpha$. For $\alpha = 0.906$, the $d_\alpha$ is $2R_S$. The $d_\alpha =2R_S$ represents  a special configuration where the overlap among adjacent nodes just becomes zero. We refer to this $d_\alpha$ as $d_p$ which denotes the point overlap. The mere knowledge of relative inter node distance with respect to $d_p$ can be used to define bounds on $\alpha$. For given arbitrary $d$, bounds on $\alpha$ can be given as.
 \begin{align}
\alpha = \left\{
        \begin{array}{ll}
            < 90.6\%& \quad    d >d_p  \\ 
              90.6\% & \quad   d=d_p \\           
             > 90.6\% & \quad   d<d_p .
        \end{array}
    \right.
\end{align}
 
\begin{figure*}[ht!]
\hspace{1cm}
\centering
\begin{subfigure}[h] {0.4\textwidth}
\centering
\includegraphics[scale =0.3]{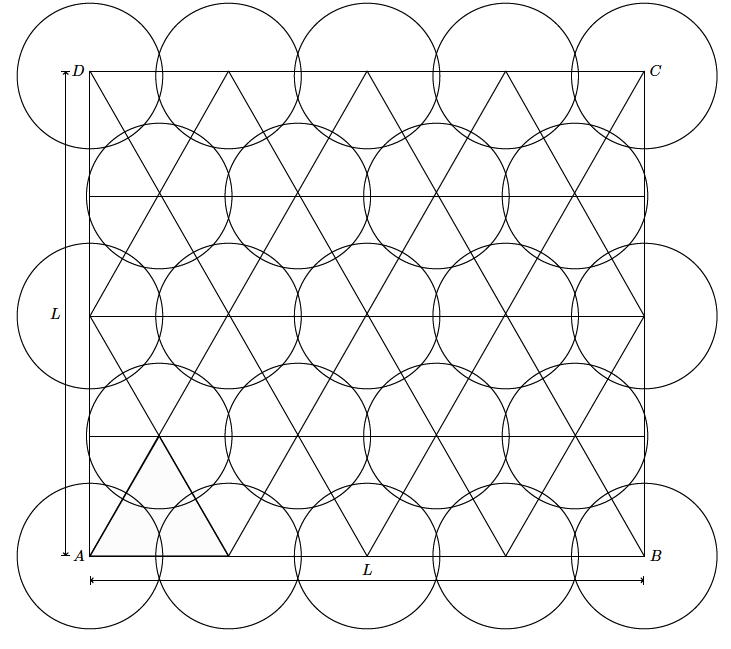}
\caption{}
\label{fig:RDS}
\end{subfigure}
\hspace{1cm}
\begin{subfigure}[h]{0.4\textwidth}
\centering
\includegraphics[scale =0.4]{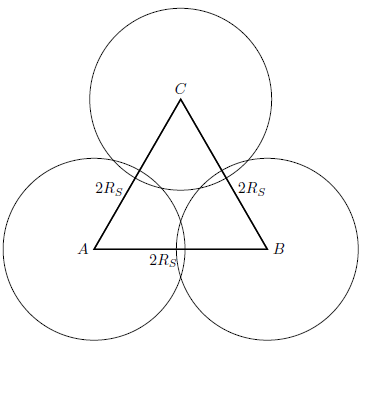}
\caption{}
\label{fig:TP}
\end{subfigure}
\caption{Triangular deployment for maximum coverage with nodes' sensing region overlap in the sensing field: (a) Deployment in sensing field, and (b) Basic unit of the pattern }
\label{fig:Triangular Deployment}
 \end{figure*}
\vspace{-2mm}
\begin{lemma}
The coverage ratio of a sensing field is inversely proportional to the square of inter-node distance $d$ when $d\geq 2R_S$. 
\end{lemma}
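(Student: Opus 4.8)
The plan is to reuse the area bookkeeping of Lemma~1, but in the regime $d\ge 2R_S$ where the geometry collapses. The key observation is that $d\ge 2R_S$ forces the sensing disks of two neighbouring lattice nodes to be disjoint (tangent at a single point when $d=2R_S$), so $I_{ij}=0$: there is no region with $F(z)\ge 2$, and the covered area is simply the sum of the disk pieces with no inclusion--exclusion correction. As in the proof of Lemma~1, I would tile the field $\Psi$ by the equilateral triangles of the mesh, each of side $d$ and area $\tfrac{\sqrt3}{4}d^2$ (the basic unit of Fig.~\ref{fig:TP}); for a large number of nodes the partial triangles along $\partial\Psi$ contribute negligibly, so the coverage ratio of $\Psi$ equals the coverage ratio inside one representative triangle.

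Inside one such triangle the covered set is the union of the three circular sectors cut out at the three vertices by the two incident sides, each sector having radius $R_S$ and central angle equal to the interior angle $\pi/3$ of the equilateral triangle. Since $d\ge 2R_S$ means $R_S\le d/2$, every point of a given vertex's sector lies within distance $d/2$ of that vertex, so each sector is strictly contained in the triangle and the three sectors are pairwise disjoint. Summing the three sector areas gives a covered area of $3\cdot\tfrac12 R_S^2\cdot\tfrac{\pi}{3}=\tfrac{\pi R_S^2}{2}$, which no longer depends on $d$. Dividing by the cell area yields
\begin{align*}
\alpha=\frac{\pi R_S^2/2}{\tfrac{\sqrt3}{4}d^2}=\frac{2\pi R_S^2}{\sqrt3\, d^2},
\end{align*}
so for fixed $R_S$ the coverage ratio is a constant multiple of $1/d^2$, i.e.\ $\alpha\propto 1/d^2$, which is the assertion of the lemma. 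As a consistency check, $d=2R_S$ gives $\alpha=\pi/(2\sqrt3)\approx 0.906$, matching the $d=d_p$ case of \eqref{eq:4} and of the bounds in (5).

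The only step that requires genuine care rather than routine computation is the geometric claim that the three vertex sectors are disjoint and lie inside the triangle: this is precisely where the threshold $d\ge 2R_S$ (equivalently $R_S\le d/2$) is invoked, and it is also the reason the analysis of coverage versus spacing splits exactly at $d=2R_S$ between this lemma and Lemma~1. The remaining point — discarding the boundary triangles along $\partial\Psi$ — is the same large-network limiting argument already used in Lemma~1, so it does not need to be re-proved here.
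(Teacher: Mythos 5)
Your proof is correct and follows essentially the same route as the paper: tiling $\Psi$ by the equilateral cells of the mesh and computing $\alpha$ as the ratio of the circular area inside one triangle $\triangle ABC$ (the three disjoint vertex sectors of angle $\pi/3$, total area $\pi R_S^2/2$) to the cell area $\tfrac{\sqrt{3}}{4}d^2$, giving $\alpha = 2\pi R_S^2/(\sqrt{3}\,d^2)$. You merely make explicit the steps the paper leaves implicit, namely that $d\geq 2R_S$ forces $I_{ij}=0$ so the sectors are pairwise disjoint and contained in the triangle, and the consistency check at $d=2R_S$ against Lemma~1.
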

\begin{proof}
From Lemma 1, one can compute $d_\alpha$ for coverage ratio $\alpha$, where $\alpha <0.9068$. 
The nodes are placed  as shown in Fig.\ref{fig:FT}. One can observe that the pattern in Fig. \ref{fig:FSD} can be repeated to get the configuration shown in Fig. \ref{fig:FT}. The  $\alpha$ for the sensing field, $\Psi$ is equal to
 \begin{figure*}[ht!]
\hspace{1cm}
\centering
\begin{subfigure}[h] {0.4\textwidth}
\centering
\includegraphics[scale =0.3]{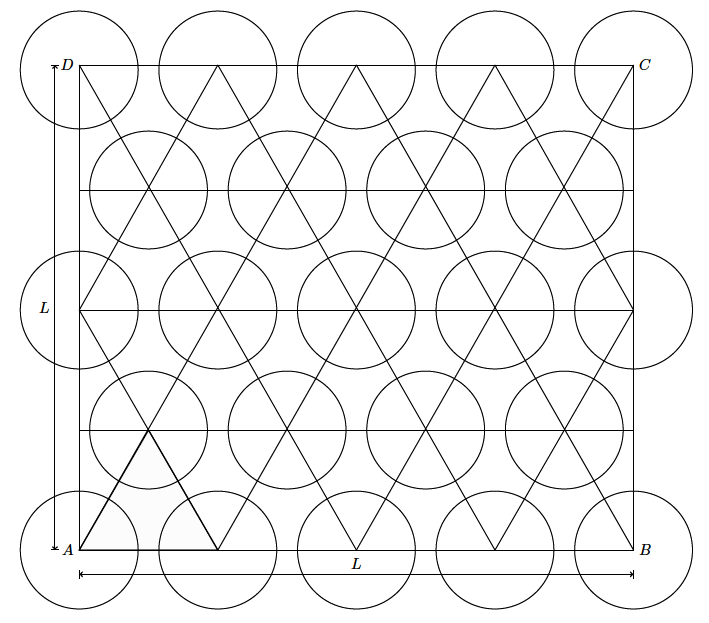}
\caption{}
\label{fig:FT}
\end{subfigure}
\hspace{1cm}
\begin{subfigure}[h]{0.4\textwidth}
\centering
\includegraphics[scale =0.5]{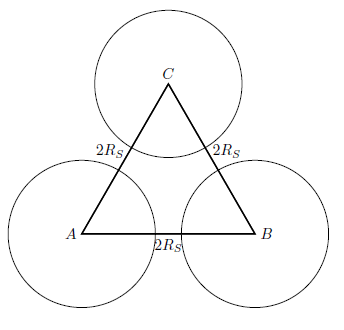}
\caption{}
\label{fig:FSD}
\end{subfigure}
 \caption{Triangular deployment for arbitrary partial coverage without any overlap in nodes' sensing regions : (a) Deployment in sensing field, and and (b) Basic element in the deployment }
\label{fig:fig_image2}
\end{figure*}
 \begin{align} \label{eq:5}
 \alpha &= \frac{\text {Area of circular region in $\bigtriangleup$ABC}}{\text{Area of $\bigtriangleup$ ABC}}\nonumber\\
&= \frac{2 \pi {R_s}^2} {\sqrt{3}d^2}. 
\end{align}
Thus, for a given sensing radius, the coverage ratio  depends only on the spacing between two adjacent nodes and follows the inverse square law. 
\end{proof}
\vspace{-3mm}
In this case also, we can set the spacing between triangle vertices to meet desired $\alpha$-coverage.
 So far in this letter, we have focused on the coverage problem, which ensures that an event happening at any point in FoI is detected. In order to transfer sensed information network must be connected. The following theorem provides the condition on the communication range to ensure connected network for the required $\alpha$.

\begin{theorem}
For a given coverage fraction, $\alpha \in (0.906,1)$, we will have ${R^{min}_C} \in (\sqrt{3} R_S, 2 R_S)$. Having $R_C \geq {R^{min}_C}$  is both  necessary and sufficient condition to ensure that $\alpha$-coverage implies connectivity.  
\end{theorem}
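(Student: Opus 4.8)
The plan is to identify $R_C^{min}$ with the lattice spacing $d_\alpha$ that Lemma~1 assigns to the target coverage fraction $\alpha$, and then to show that connectivity of the triangular mesh is governed by the single inequality $R_C \geq d_\alpha$.

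First I would verify that $d_\alpha$ is well defined and lies in the stated interval. Put $\beta = d/2R_S$ and $g(\beta) = \pi/2 - 3\arccos\beta + 3\beta\sqrt{1-\beta^2}$, so that the right-hand side of equation~(\ref{eq:4}) is $\alpha(\beta) = g(\beta)/(\sqrt{3}\beta^2)$. A short differentiation gives $g'(\beta) = 6\sqrt{1-\beta^2}$, hence the sign of $\alpha'(\beta)$ equals the sign of $\beta g'(\beta) - 2g(\beta) = 6\arccos\beta - \pi$, which is strictly negative on $(\sqrt{3}/2,\,1)$. Thus $\alpha$ is continuous and strictly decreasing in $\beta$, equivalently in $d$, on $[\sqrt{3}R_S,\,2R_S]$, with $\alpha = 1$ at $d = \sqrt{3}R_S$ and $\alpha = \pi/(2\sqrt{3}) \approx 0.906$ at $d = 2R_S$. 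By the intermediate value theorem, every $\alpha \in (0.906,\,1)$ is attained at a unique spacing $d_\alpha$, and $\sqrt{3}R_S < d_\alpha < 2R_S$.

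Next I would analyse connectivity of the mesh with side $d_\alpha$. In this lattice the minimum distance between two distinct nodes is exactly $d_\alpha$, and two nodes are joined in the communication graph iff their separation is at most $R_C$. If $R_C \geq d_\alpha$, every node reaches each of its (at most six) nearest neighbours, so the communication graph contains the nearest-neighbour triangular-lattice graph, which is connected; together with the standard assumption that the base station sits within $R_C$ of some node, the whole network is connected. Conversely, if $R_C < d_\alpha$, no two distinct nodes are within range, so any deployment with at least two nodes is disconnected. Hence the configuration realising $\alpha$-coverage is connected if and only if $R_C \geq d_\alpha$; equivalently, $R_C \geq d_\alpha$ is necessary and sufficient for $\alpha$-coverage to imply connectivity, so $R_C^{min} = d_\alpha$.

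Combining the two parts yields $R_C^{min} = d_\alpha \in (\sqrt{3}R_S,\,2R_S)$ together with the asserted characterisation. The only real computation is the monotonicity of $\alpha$ in $d$, and the cancellation $\beta g'(\beta) - 2g(\beta) = 6\arccos\beta - \pi$ makes that painless; so the step carrying the weight of the argument is the conceptual one -- recognising that in a deterministic lattice deployment ``the $\alpha$-covering network is connected'' collapses to a single nearest-neighbour inequality, which is what lets $R_C^{min}$ inherit its interval directly from the range of admissible $d_\alpha$.
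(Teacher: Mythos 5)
Your proposal is correct and follows essentially the same route as the paper: identify $R^{min}_C$ with the lattice spacing $d_\alpha$, observe that connectivity of the triangular mesh reduces to $R_C \geq d_\alpha$, and read off the interval $(\sqrt{3}R_S, 2R_S)$ from the endpoint values of equation~(4) at $\alpha = 1$ and $\alpha \approx 0.906$. Your explicit monotonicity computation (showing the sign of $\alpha'(\beta)$ is that of $6\arccos\beta - \pi < 0$, hence $d_\alpha$ is unique) is a worthwhile addition that the paper's proof implicitly assumes but never verifies.
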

\begin{proof}
For the considered deterministic deployment, the sensing network is connected if minimum communication range of nodes is greater than or equal to the maximum separation between any two adjacent nodes for a given $\alpha$ i.e., ${R^{min}_C} \geq d_\alpha$.
For an $\alpha \in (0.906,1)$, corresponding $d_\alpha$ can be obtained in two ways- direct computation using equation (\ref{eq:4}), and retrieval of data-pair $(\alpha,d_\alpha)$ from a look up table.
 The inter-node distances corresponding to 0.906 and 1 are $2R_S$ and $\sqrt{3}R_S$ respectively. Therefore for $\alpha \in (0.906,1)$, the ${R^{min}_C}$ must be in the range $(\sqrt{3}R_S,2R_S)$.
\end{proof}
\begin{theorem}
For a given $\alpha$ in the range $\alpha\leq 0.906 $, the condition $D_{SF} \leq {R^{min}_C}\leq (\sqrt{\frac{2\pi}{\sqrt{3}\alpha}})\times{R_S}$ is both necessary and sufficient to ensure that  $\alpha$-coverage implies connectivity. Here, $D_{SF}$ is sensing field diameter.  
\end{theorem}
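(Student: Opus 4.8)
The plan is to obtain the upper endpoint directly from Lemma 2 and to recognise the lower endpoint as the price imposed by the bounded extent of $\Psi$. Lemma 2 shows that in the no-overlap regime $d\ge 2R_S$ --- equivalently $\alpha\le 0.906$ --- the triangular deployment satisfies $\alpha=\frac{2\pi R_S^2}{\sqrt{3}\,d^2}$, so the inter-node spacing that realises a prescribed $\alpha$-coverage is $d_\alpha = R_S\sqrt{2\pi/(\sqrt{3}\,\alpha)}$, which is precisely the quantity $(\sqrt{2\pi/(\sqrt{3}\,\alpha)})\,R_S$ in the statement. Exactly as in Theorem 1, the triangular mesh at spacing $d_\alpha$ becomes connected as soon as each node can reach its lattice neighbours, i.e.\ as soon as $R_C\ge d_\alpha$; this gives $R^{min}_C\le d_\alpha$ and simultaneously places us in the sparse regime in which the interval $[D_{SF},d_\alpha]$ is non-degenerate, namely $D_{SF}\le d_\alpha$.

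For sufficiency I would argue directly. If $R_C\ge D_{SF}$, then the Euclidean distance between any two deployed sensors is at most the diameter of the sensing field and hence at most $R_C$; the communication graph is therefore complete and, in particular, connected. Consequently every deployment --- in particular every $\alpha$-covering one --- is connected, so $R_C\ge D_{SF}$ already suffices for ``$\alpha$-coverage $\Rightarrow$ connectivity''. Together with the bound of the previous paragraph, this shows that any $R_C$ in the window $[D_{SF},d_\alpha]$ does the job.

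For necessity I would show that no communication range below $D_{SF}$ can work. Because $\alpha\le 0.906$ keeps the required number of nodes small relative to the size of $\Psi$, one can realise $\alpha$-coverage with a deployment that splits the sensing load into two groups placed near two extremal points of $\Psi$, so that the separation between the groups is arbitrarily close to $D_{SF}$. If $R_C<D_{SF}$ this deployment is disconnected while still achieving coverage fraction at least $\alpha$, which contradicts the implication; hence $R^{min}_C\ge D_{SF}$. Combining the two directions confines $R^{min}_C$ to $[\,D_{SF},\,R_S\sqrt{2\pi/(\sqrt{3}\,\alpha)}\,]$, as claimed.

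The step I expect to be the real obstacle is the finite-field bookkeeping underlying necessity: one must verify that the two-group construction is genuinely feasible inside $\Psi$ --- that the small number of sensors dictated by equation~(\ref{eq:5}) can be packed into two far-apart regions while their union still covers at least an $\alpha$-fraction of $\Psi$ --- and, dually, that the lattice of Lemma 2 at spacing $d_\alpha$ still ``fits'' the field when $d_\alpha$ is comparable to $D_{SF}$. Equivalently, the delicate point is to delineate precisely the range of $\alpha\le 0.906$ for which the interval $[D_{SF},d_\alpha]$ is non-empty, since outside that range the binding constraint reverts to $d_\alpha$ as in Theorem 1; everything else reduces to substitution into~(\ref{eq:5}).
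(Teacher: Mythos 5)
Your derivation of the spacing $d_\alpha = \sqrt{2\pi/(\sqrt{3}\,\alpha)}\,R_S$ from Lemma~2 matches the paper exactly, and your observation that $R_C \ge D_{SF}$ makes the communication graph complete is the same remark the paper uses. The genuine gap is in your necessity half, and it is not the ``finite-field bookkeeping'' you defer to the end --- it is fatal to the argument as proposed. The paper's proof assigns the two endpoints the opposite roles from yours: it derives $R_C \ge d_\alpha = \sqrt{2\pi/(\sqrt{3}\,\alpha)}\,R_S$ as the \emph{necessary} (lower) bound, because in the fixed triangular-lattice deployment each node must reach its nearest lattice neighbours, and it invokes $D_{SF}$ only as the trivial \emph{upper} cap (a radius exceeding the field diameter buys nothing). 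Your attempt to prove $R^{min}_C \ge D_{SF}$ via a two-group adversarial deployment quantifies the implication over \emph{all} deployments achieving $\alpha$-coverage, which is outside the paper's model: throughout Section~III the deployment is the deterministic triangular mesh at spacing $d_\alpha$, and ``$\alpha$-coverage implies connectivity'' is evaluated for that deployment. Worse, your two conclusions collide with each other: you establish $R^{min}_C \le d_\alpha$ from the lattice and $R^{min}_C \ge D_{SF}$ from the adversarial construction, yet for any field containing more than a couple of lattice cells one has $D_{SF} \gg d_\alpha$, so the interval $[D_{SF}, d_\alpha]$ you are trying to certify is empty. You cannot repair this by restricting the range of $\alpha$; the contradiction persists whenever the field is large relative to $R_S$.

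Part of the blame lies with the theorem statement itself, whose displayed chain $D_{SF} \le R^{min}_C \le \sqrt{2\pi/(\sqrt{3}\,\alpha)}\,R_S$ is inconsistent with the paper's own proof (which establishes $\sqrt{2\pi/(\sqrt{3}\,\alpha)}\,R_S$ as the lower bound on the required radius and $D_{SF}$ as the upper bound). You took the literal ordering at face value and built machinery to defend it; the correct move is to prove what the proof actually supports, namely that for the triangular deployment realising $\alpha$-coverage with $\alpha \le 0.906$ the minimum radius guaranteeing connectivity is $d_\alpha = \sqrt{2\pi/(\sqrt{3}\,\alpha)}\,R_S$, and that no radius beyond $D_{SF}$ is ever needed. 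Dropping the adversarial two-group construction and swapping the roles of the endpoints reduces your argument to a correct (and much shorter) one.
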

\vspace{-4mm}
\begin{proof}
To maintain connectivity in the any sensing field, communication radius $R_C$ must be $\geq d$. We can compute the value of  $d$ from equation (\ref{eq:5}) to get $R_C$ as a function of $\alpha$ and $R_S$, i.e
\vspace{-2mm}
\begin{align}
R_C\geq (\sqrt{\frac{2\pi}{\sqrt{3}\alpha}})\times{R_S}.
\end{align}
Above expression gives the lower bound on $R_C$. To get upper bound on the required communication radius, we consider the worst case scenario of a two-node network. To create the worst-case scenario, we place the nodes at the farthest possible points in the sensing field. In this scenario, the distance between two nodes is equal to the sensing field diameter. The diameter of a shape is the upper bound on the set of all pairwise distances between all points in the area $\phi$. It is given by
\begin{align}
D_{SF} = sup\{ d(z_1, z_2) \mid z_1,z_2 \in \phi\}.
\end{align}
In other words, maximum inter-node distance is $D_{SF}$. To form a connected network, these nodes would need to set their minimum communication radius equal to the sensing field diameter. 
 The communication radius need not to be set greater than the diameter of the sensing field  as it gives no additional advantage and leads to waste of energy. 
 \end{proof}
 Although the relationship ${R^{min}_C} \geq 2R_S$ is necessary to guarantee network connectivity provided that the required coverage is ensured. We have computed the lower and the upper bounds on the minimum communication radius for the various ranges of $\alpha$ as 
 \begin{align}
{R^{min}_C} = \left\{
        \begin{array}{ll}
   f(\alpha)R_S & \quad   \alpha <  90.6\%  \\ 
              2R_S &\quad   \alpha= 90.6\% \\           
             \in(\sqrt{3}R_S,2R_S) & \quad  \alpha \in (90.6 \%, 100\%).
        \end{array}
    \right.
\end{align}
Here,$ f(\alpha)=\sqrt{\frac{2\pi}{\sqrt{3}\alpha}}$ and for $\alpha \in (90.6 \%, 100\%)$ the exact value of ${R^{min}_C}$ can be computed using equation (4), as an implicit function of $d_\alpha$. 
These bounds help to reduce the  communication energy. This could be exploited by the designers to conserve the energy in the network.
\vspace{-2mm}
\section{Conclusion}
This letter explored the problem of maintaining both the desired partial coverage and full connectivity in deterministic-ally deployed wireless sensor networks. We proved that $\alpha$-coverage can imply full network connectivity  and computed the tighter lower  and upper bounds on the minimum communication radius for various ranges of $\alpha$. At a high-level, our analysis of $\alpha$-coverage  advocates  the use of our results because of the potential energy savings. The work in this paper can be extended in several directions. The first extension is to find the exact or approximate partial coverage condition for the random deployment. Second possible extension is to consider probabilistic sensing model and probabilistic communication model and third extension is to use results in the design of topology control  for WSNs.
\vspace{-3mm}
\ifCLASSOPTIONcaptionsoff
  \newpage
\fi
 
\bibliography{Reference}
\bibliographystyle{unsrt}

\end{document}